\newcommand{\NX}{{n_{\mathrm{x}}}}
\newcommand{\NY}{{n_{\mathrm{y}}}}
\newcommand{\NU}{{n_{\mathrm{u}}}}
\newcommand{\NP}{{n_{\mathrm{p}}}}
\newcommand{\NO}{{n_{\mathrm{o}}}}
\newcommand{\NR}{{n_{\mathrm{r}}}}
\newcommand{\NB}{{n_{\mathrm{b}}}}
\newcommand{\NC}{{n_{\mathrm{c}}}}
\newcommand{\NBh}{{\hat{n}_{\mathrm{b}}}}
\newcommand{\NCh}{{\hat{n}_{\mathrm{c}}}}
\newcommand{\sU}{\mathbb{U}}
\newcommand{\sP}{\mathbb{P}}
\newcommand{\sX}{\mathbb{X}}
\newcommand{\sY}{\mathbb{Y}}
\newcommand{\sZ}{\mathbb{Z}}
\newcommand{\sI}{\mathbb{I}}
\newcommand{\Ru}{\mathbb{R}^{n_{\mathrm{u}}}}
\newcommand{\Rp}{\mathbb{R}^{n_{\mathrm{p}}}}
\newcommand{\Rx}{\mathbb{R}^{n_{\mathrm{x}}}}
\newcommand{\Ry}{\mathbb{R}^{n_{\mathrm{y}}}}
\newcommand{\hank}{\mathcal{H}}
\newcommand{\expct}{\mathbb{E}}
\newcommand{\selO}{\nu}
\newcommand{\selR}{\varsigma}
\newcommand{\SNR}{\mathrm{SNR}}
\newcommand{\BFR}{\mathrm{BFR}}
\newcommand{\Tr}{{\mathrm{Tr}}}
\newcommand{\Afnc}{\mathcal{A}}
\newcommand{\Bfnc}{\mathcal{B}}
\newcommand{\Cfnc}{\mathcal{C}}
\newcommand{\Dfnc}{\mathcal{D}}
\newcommand{\Kfnc}{\mathcal{K}}
\newcommand{\Bflt}{\EuScript{B}}
\newcommand{\Cflt}{\EuScript{C}}
\newcommand{\Bflth}{\hat{\EuScript{B}}}
\newcommand{\Cflth}{\hat{\EuScript{C}}}
\newcommand{\Bpar}{\mathtt{g}}
\newcommand{\Cpar}{\mathtt{h}}
\newcommand{\Bparh}{\hat{\mathtt{g}}}
\newcommand{\Cparh}{\hat{\mathtt{h}}}
\newcommand{\numstr}[1]{\#(#1)}
\newcommand{\Dat}{\EuScript{D}_N}
\newcommand{\Dval}{\EuScript{D}_{\mathrm{val}}}
\newcommand{\Gamfnc}{\hat\Gamma}
\newtheorem{thm}{Theorem}
\newtheorem{lem}[thm]{Lemma}
\title{\LARGE \bf Alternative Form of Predictor Based Identification of \\ LPV-SS Models with Innovation Noise
}
\author{Pepijn Cox$^\dag$ and Roland T\'{o}th$^\dag$  
\thanks{$^\dag$ Pepijn Cox and Roland T\'{o}th are with the Control Systems Group, Department of Electrical Engineering, Eindhoven University of Technology, P.O. Box 513, 5600 MB Eindhoven, The Netherlands,
        {\tt\small \{p.b.cox,r.toth\}@tue.nl}}%
}
\begin{document}

\maketitle
\thispagestyle{empty}
\pagestyle{empty}
\begin{abstract}
In this paper, we present an approach to identify linear parameter-varying (LPV) systems with a state-space (SS) model structure in an innovation form where the coefficient functions have static and affine dependency on the scheduling signal. With this scheme, the curse of dimensionality problem is reduced, compared to existing predictor based LPV subspace identification schemes. The investigated LPV-SS model is reformulated into an equivalent impulse response form, which turns out to be a moving average with exogenous inputs (MAX) system. The Markov coefficient functions of the LPV-MAX representation are multi-linear in the scheduling signal and its time-shifts, contrary to the predictor based schemes where the corresponding LPV auto-regressive with exogenous inputs system is multi-quadratic in the scheduling signal and its time-shifts. In this paper, we will prove that under certain conditions on the input and scheduling signals, the $\ell_2$ loss function of the one-step-ahead prediction error for the LPV-MAX model has only one unique minimum, corresponding to the original underlying system. Hence, identifying the LPV-MAX model in the prediction error minimization framework will be consistent and unbiased. The LPV-SS model is realized by applying an efficient basis reduced Ho-Kalman realization on the identified LPV-MAX model. The performance of the proposed scheme is assessed on a Monte Carlo simulation study.
\end{abstract}
\vspace{-1.5mm}
\section{Introduction}

In recent years, identification of \textit{linear parameter-varying state-space} (LPV-SS) models has received considerable attention, e.g.,~\cite{Wingerden2009a,Wills2011,Verdult2003,Toth2012}, resulting in a succesful extention of \textit{subspace identification} (SID) and \textit{prediction error minimization} (PEM) methods to the LPV case. The LPV-SS identification framework has been applied to data-driven modelling of wind turbines~\cite{Wingerden2009}, forced Lorenz attractors~\cite{Larimore2015}, power management of Web service systems~\cite{Tanelli2008}, or to capture traffic flow~\cite{Luspay2009}, to mention a few. A popular choice is to capture the underlying system as a discrete-time LPV-SS representation with an innovation noise model: \vspace{-2mm}
\begin{subequations} \label{eq:SSrep}
\begin{alignat}{3}
  qx &= \Afnc(p)&x&+\Bfnc(p)&&u+\Kfnc(p)e,  \label{eq:SSrepState}  \\
  y   &= \Cfnc(p)&x&+\Dfnc(p)&&u+e, \label{eq:SSrepOut}
\end{alignat}
\end{subequations}\vskip -1.5mm \noindent
where $x:\sZ\rightarrow\sX=\Rx$ is the state variable, $y:\sZ\rightarrow\sY=\Ry$ is the measured output signal, $u:\sZ\rightarrow\sU=\Ru$ denotes the input signal, $p:\sZ\rightarrow\sP \subseteq\Rp$ is the scheduling variable, $q$ is the forward time-shift operator, e.g., $qx(t)=x(t+1)$ where $t\in\mathbb{Z}$ is the discrete time, $e:\sZ\rightarrow\Ry$ is a sampled path of a zero-mean i.i.d. stationary noise processes with Gaussian distribution, i.e., $e(t)\sim \mathcal{N}(0,\Sigma_{\mathrm{e}})$ with a nonsingular covariance $\Sigma_{\mathrm{e}}\in\mathbb{R}^{\NY\times\NY}$. The matrix functions $\Afnc(\cdot),...,\Kfnc(\cdot)$ defining the SS representation \eqref{eq:SSrep} are usually taken to be affine combinations of $p$, to coincide with the majority of LPV control synthesis methods, e.g.,~\cite{Mohammadpour2012}. Hence, these matrix functions are defined as \vspace{-1.5mm}
\begin{equation}\label{eq:sysMatrices}
\begin{aligned} 
 \Afnc(p)&=A_0+\hspace{-1mm}\sum_{i=1}^{\NP} A_i p_i,
\end{aligned}
\end{equation} \vskip -1mm \noindent
where $\Bfnc(p),\ldots,\Kfnc(p)$ are equivalently parametrized as~\eqref{eq:sysMatrices} with $\{A_i,B_i,C_i,D_i,K_i\}_{i=0}^{\NP}$ being constant matrices of appropriate dimensions. 

Identification of LPV-SS models can be formulated in the PEM framework, where the PEM methods can roughly be categorized as:
\begin{enumerate*}[label=\roman*)]
	\item methodologies which assume that full state measurements are available, e.g.,~\cite{Nemani1995};
	\item grey box schemes, where only a small subset of all parameters is estimated, e.g.,~\cite{Gaspar2007};
	\item set-membership approaches, e.g.,~\cite{Novara2011}; or
	\item direct PEM methods using gradient based methodologies, e.g.,~\cite{Wills2011,Verdult2003}.
\end{enumerate*}
However, assuming that a full measurement of the state or detailed model of the system is available, is unrealistic in many applications. Therefore, only set-membership and direct PEM methods classify as `black-box' identification schemes. In general, the set-membership approaches have a significant higher computational load compared to the direct PEM methods, as the set-membership schemes estimate a feasible model set for which only heuristic techniques exist. In addition, these schemes have often unrealistic noise assumptions and consistency of the overall scheme is difficult to show. For the direct PEM methods, the nonlinear and nonunique optimization problem is usually solved by applying gradient-based search strategies or an expectation maximization strategy. Regardless of the strategy, direct PEM methods are solved in an iterative way, are prone to local minima, and their convergence depends heavily on a proper initial guess.  

On the contrary, SID methods use convex optimization, to identify a specific LPV \textit{input-output} (IO) structure, from which an LPV-SS model is realized by using matrix decomposition methods. Subspace schemes have extensively been applied in the \textit{linear time invariant} (LTI) case and extensions to the LPV framework can be found, e.g.,~\cite{Toth2012,Larimore2013,Wingerden2009a,Santos2007}. Unfortunately, SID methods usually depend on approximations to get a convex problem, and, as a consequence, in the LPV case, these approaches suffer heavily from the curse of dimensionality and/or result in ill-conditioned estimation problems with high parameter variances. 

Based on these considerations, efficient estimation of LPV-SS models remains a central problem to be solved.

In this paper, we would like to reduce the dimensionality problems associated with the predictor based SID methods by restating the LPV-IO identification setting to avoid multi-quadratic dependency on the scheduling signal and its time-shifts. First, the LPV-SS representation~\eqref{eq:SSrep} is formulated in terms of its \textit{impulse response representation} (IIR). As Section~\ref{sec:impulseResponseRep} highlights, the corresponding IIR turns out to be a \textit{moving average with exogenous inputs} (MAX) representation, where the Markov coefficient functions are multi-linear in the scheduling signal and its time-shifts. In Section~\ref{sec:MaxIdent}, the identification of the multivariable MAX model is given. We will proof that minimizing the $\ell_2$ loss function of the one-step-ahead prediction error has only one unique minimum under some mild conditions on the input and scheduling signals, which is the main contribution of this paper. To this end, we extent the uniqueness proof of the LTI multivariable moving average model~\cite{Stoica1982} to the LPV case, which is essential in showing uniqueness of the overall LPV-MAX PEM identification method. Hence, applying a pseudo linear regression to identify the LPV-MAX model results in a consistent and unbiased estimate. After identifying the LPV-MAX model, the LPV-SS model is estimated by utilizing the bases reduced Ho-Kalman realization scheme of~\cite{Cox2015}, which, in this paper, is modified to also realize the noise model. In Section~\ref{sec:example}, the performance of the identification method is assessed by a Monte-Carlo study, followed by some conclusions in Section~\ref{sec:conclusion}.

\vspace{-1.5mm}
\section{Impulse Response Representation} \label{sec:impulseResponseRep} \vspace{-1.5mm}

To be able to solve our identification problem, we use an alternative representation of~\eqref{eq:SSrep}:
\begin{lem}[IIR representation~\cite{Toth2010a}] 
\label{lmn:IIR}
Any asymptotically stable\footnote{An LPV system, represented in terms of~\eqref{eq:SSrep}, is called asymptotically stable in the deterministic sense, if, for all trajectories of $(u(t),e(t),p(t),y(t))$ satisfying~\eqref{eq:SSrep}, with $u(t)=0$, $e(t)=0$ for $t\geq0$ and $p(t)\in\sP$, it holds that $\lim_{t\rightarrow\infty}\vert y(t)\vert = 0$.} LPV system given in terms of representation \eqref{eq:SSrep} has a convergent series expansion in terms of the pulse-basis $\{q^{-i}\}_{i=0}^\infty$ given by \vspace{-2mm}
	\begin{equation} \label{eq:IIR}
		y = \underbrace{\sum_{i=0}^\infty(g_i\diamond p)q^{-i}u}_{\mbox{\footnotesize process model}} +\underbrace{\sum_{j=0}^\infty(h_j\diamond p)q^{-j}e}_{\mbox{\footnotesize noise model}},
	\end{equation} \vskip -1.5mm \noindent
	where $g_i\in \mathscr{R}^{\NY\times\NU}$ and $h_j\in \mathscr{R}^{\NY\times\NY}$ are the expansion coefficient functions, i.e., \emph{Markov coefficients}, of the process and noise dynamics, respectively. The ring of all real polynomial functions with finite dimensional domain is defined by $\mathscr{R}$, the operator $\diamond:(\mathscr{R},\sP^\sZ)\rightarrow(\mathbb{R}^{\NY\times\NU})^\sZ$ denotes $(g_i\diamond p)=g_i(p(t),\ldots,p(t-\tau))$ with $\tau\in\mathbb{Z}$. \hfill $\square$
\end{lem}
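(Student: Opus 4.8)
The plan is to derive the expansion~\eqref{eq:IIR} by recursively unrolling the state recursion~\eqref{eq:SSrepState} and substituting the outcome into the output equation~\eqref{eq:SSrepOut}. Rewriting~\eqref{eq:SSrepState} in backward-shifted form,
\begin{equation*}
x(t) = \Afnc(p(t-1))x(t-1) + \Bfnc(p(t-1))u(t-1) + \Kfnc(p(t-1))e(t-1),
\end{equation*}
and iterating it $n$ times expresses $x(t)$ as a finite sum in which an ordered product $\Afnc(p(t-1))\cdots\Afnc(p(t-k+1))$ of time-shifted state matrices multiplies $\Bfnc(p(t-k))u(t-k)$ and $\Kfnc(p(t-k))e(t-k)$, together with a remainder $\Afnc(p(t-1))\cdots\Afnc(p(t-n))\,x(t-n)$. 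Passing to the limit $n\to\infty$ (once the remainder is shown to vanish, see below) and inserting the resulting series for $x(t)$ into~\eqref{eq:SSrepOut} yields, after grouping terms by their time-shift, precisely the form~\eqref{eq:IIR}.

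First I would read off the Markov coefficient functions from this grouping. The instantaneous contributions give $g_0=\Dfnc$ and $h_0=\eye$, while for $i\geq 1$,
\begin{align*}
(g_i\diamond p)(t) &= \Cfnc(p(t))\Big(\textstyle\prod_{k=1}^{i-1}\Afnc(p(t-k))\Big)\Bfnc(p(t-i)),\\
(h_i\diamond p)(t) &= \Cfnc(p(t))\Big(\textstyle\prod_{k=1}^{i-1}\Afnc(p(t-k))\Big)\Kfnc(p(t-i)),
\end{align*}
with the ordered products read left-to-right in $k$ and taken to be the identity when $i=1$. Because $\Afnc,\ldots,\Kfnc$ are affine in $p$ by~\eqref{eq:sysMatrices}, every entry of these finite matrix products is a polynomial in the entries of $p(t),\ldots,p(t-i)$; hence $g_i,h_i\in\mathscr{R}$, and the finite-memory structure $(g_i\diamond p)=g_i(p(t),\ldots,p(t-\tau))$ with $\tau=i$ holds by construction. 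This settles the claimed form and regularity of the coefficient functions.

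The main obstacle is the convergence of the two series, i.e.\ showing that the remainder $\Afnc(p(t-1))\cdots\Afnc(p(t-n))\,x(t-n)\to 0$ and that both sums are absolutely summable for bounded $u,e$. This is exactly where asymptotic stability must be invoked: setting $u=e=0$ in~\eqref{eq:SSrep} makes the autonomous state evolve as $x(t)=\Afnc(p(t-1))\cdots\Afnc(p(t_0))\,x(t_0)$, so the deterministic stability notion of the footnote forces these products, applied to any initial state along any admissible trajectory $p(t)\in\sP$, to decay. I would upgrade this pointwise decay to a \emph{uniform} exponential bound $\|\Afnc(p(t-1))\cdots\Afnc(p(t-n))\|\leq C\rho^{\,n}$ with $\rho<1$, holding over all scheduling trajectories in the (typically compact) set $\sP$; such a bound supplies a convergent geometric majorant for both series, legitimizes interchanging the limit with the summation, and closes the argument. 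Bridging the gap between the pointwise ``output-to-zero'' notion in the footnote and this uniform geometric decay is the delicate technical point, and is where I expect the argument to demand the most care.
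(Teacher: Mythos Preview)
The paper does not actually prove this lemma; it is quoted from the cited reference~\cite{Toth2010a} and simply followed by the explicit display~\eqref{eq:MarkovParam}--\eqref{eq:subMarkovParam} of the resulting Markov coefficients. Your derivation---iterating the state recursion, reading off the coefficient functions, and invoking asymptotic stability to kill the remainder---is exactly the standard argument behind that cited result, and the coefficient formulas you obtain match~\eqref{eq:MarkovParam} verbatim.

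The one point you flag as delicate is genuinely so and is the only place your sketch is not yet a proof. The footnote defines asymptotic stability via $\lvert y(t)\rvert\to 0$ for zero inputs, which a priori gives neither a uniform-in-$p$ rate nor state decay. To get the geometric majorant $\lVert \Afnc(p(t-1))\cdots\Afnc(p(t-n))\rVert\leq C\rho^{n}$ you need, one typically (i) uses structural observability of $(\Afnc,\Cfnc)$ so that output decay forces state decay, and then (ii) invokes compactness of $\sP$ together with the fact that for LPV systems over a compact scheduling set, asymptotic stability of the autonomous part is equivalent to uniform exponential stability (a Lyapunov/joint-spectral-radius argument). Both ingredients are available in the setting of~\cite{Toth2010a}; once you state and use them, your argument closes.
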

The IIR coefficients $\{g_i,h_i\}_{i=0}^\infty$ of~\eqref{eq:IIR} are given by \vspace{-1.5mm}
\begin{multline}  \label{eq:MarkovParam} 
y = \underbrace{\Dfnc(p)}_{g_0\diamond p}u + \underbrace{\Cfnc(p)\Bfnc(q^{-1}p)}_{g_1\diamond p}q^{-1}u+ \\
 \underbrace{\Cfnc(p)\Afnc(q^{-1}p)\Bfnc(q^{-2}p)}_{g_2\diamond p}q^{-2}u + \ldots+ \underbrace{I}_{h_0\diamond p} e + \\\
 \underbrace{\Cfnc(p)\Kfnc(q^{-1}p)}_{h_1\diamond p}q^{-1}e \!+\! \underbrace{\Cfnc(p)\Afnc(q^{-1}p)\Kfnc(q^{-2}p)}_{h_2\diamond p}q^{-2}e \!+\! \ldots\!
\end{multline} \vskip -1.5mm \noindent
where $g_i,h_i$ converges to the zero function in de $\ell_\infty$ sense as $i\rightarrow\infty$. The Markov coefficients of the process part can be written as\vspace{-2mm}
\begin{multline} 
g_m\diamond p =\Cfnc(p)\Afnc(q^{-1}p)\cdots \Afnc(q^{-(m-1)}p)\Bfnc(q^{-m}p) = \\ 
\!\sum_{i=0}^\NP\!\sum_{j=0}^\NP\!\cdots\!\!\sum_{k=0}^\NP\!\sum_{l=0}^\NP C_iA_j\cdots A_kB_lp_i (q^{-1}p_j\!)\!\cdots\!(q^{-m}p_l\!), \label{eq:subMarkovParam}
\end{multline} \vskip -1.5mm \noindent
and the Markov coefficients of the noise part $h_m$ are similar to~\eqref{eq:subMarkovParam}, however, $\Bfnc(q^{-m}p)$ is exchanged for $\Kfnc(q^{-m}p)$. The individual products $C_iA_j\cdots A_kB_l$ or $C_iA_j\cdots A_kK_l$ are called the \textit{sub-Markov parameters} of the process model and noise model, respectively. 
Due to the convergence of $g_i$ and $h_j$, it is often sufficient to truncate \eqref{eq:IIR}: \vspace{-2mm} 
\begin{equation} \label{eq:FIRmodels}
	y \approx \sum_{i=0}^\NB(g_i\diamond p)q^{-i}u + \sum_{j=0}^\NC(h_j\diamond p)q^{-j}e,
\end{equation}\vskip -1.5mm \noindent
where $\NB>0$ and $\NC>0$ are the order of the resulting \textit{finite impulse response} (FIR) models for the process and noise part.

The sub-Markov parameters in~\eqref{eq:subMarkovParam} have a multi-linear dependency on the elements of $p$ and its time-shifts, contrary to the LPV \emph{auto-regressive with exogenous inputs} (ARX) formulation of other predictor based subspace schemes, e.g.~\cite{Wingerden2009a}, which have multi-quadratic dependency. The LPV-ARX model\footnote{The LPV-ARX representation is found by substituting $e$ of~\eqref{eq:SSrepOut} into~\eqref{eq:SSrepState} and, by using this modified state equation, writing out~\eqref{eq:SSrepOut} as: \vspace{-3.5mm}
\begin{multline} \label{eq-ftn:LPV-ARX}
y = \Dfnc(p)u+\sum_{i=1}^\infty \Cfnc(p) \Big[\prod_{j=1}^{i-1} \tilde\Afnc(q^{-j}p)\Big]\tilde\Bfnc(q^{-i}p)q^{-i}u \\[-2mm] 
+\sum_{i=1}^\infty \Cfnc(p) \Big[\prod_{j=1}^{i-1} \tilde\Afnc(q^{-j}p)\Big]\Kfnc(q^{-i}p)q^{-i}y + e
\end{multline} \vskip -2mm \noindent
where $\tilde\Afnc(p)\!=\!\Afnc(p)\mbox{\small-}\Kfnc(p)\Cfnc(p)$, $\tilde\Bfnc(p)\!=\!\Bfnc(p)\mbox{\small-}\Kfnc(p)\Dfnc(p)$, and $\prod_{j=1}^{0}\!\tilde\Afnc\!=\!I$.} uses $\tilde \Afnc(p)=\Afnc(p)-\Kfnc(p)\Cfnc(p)$, therefore, increasing the complexity of the LPV-IO model to be identified. To see this, substitute $\Afnc(p)=\sum A_i p_i$ by $\tilde\Afnc(p)=\sum\sum A_i p_i-K_iC_j p_ip_j$ (with $p_0=1$) in~\eqref{eq:subMarkovParam} and, similarly, $\Bfnc$ by $\tilde\Bfnc$. To compare the additional parameters, truncate the LPV-ARX model~\eqref{eq-ftn:LPV-ARX} by orders $n_{\mathrm{a}}$ and $n_{\mathrm{d}}$ for the $u$ and $y$ polynomial, respectively; then the LPV-ARX model~\eqref{eq-ftn:LPV-ARX} has $\NY(\NU\sum_{i=0}^{n_{\mathrm{a}}}(1+\NP)^{2i+1}+\NY\sum_{j=1}^{n_{\mathrm{d}}}(1+\NP)^{2j})$ parameters and the LPV model~\eqref{eq:FIRmodels} has $\NY (\NU\sum_{i=1}^{\NB+1}(1+\NP)^i+\NY\sum_{j=2}^{\NC+1}(1+\NP)^j)$. For example, if $\NU\!\!=\!\!\NY\!\!=\!\!\NP\!\!=\!\!n_{\mathrm{a}}\!\!=\!\!\NB\!\!=\!\!\NC\!\!=\!\!n_{\mathrm{d}}\!\!=\!\!2$ then the LPV-ARX model has $1452$ parameters and the proposed LPV model has $300$, hence, a significant reduction is achieved. Therefore, a common assumption by other predictor based subspace schemes is to take either $\Kfnc$ or $\Cfnc$,$\Dfnc$ to be constant matrices, avoiding the additional modelling complexity. However, by identifying~\eqref{eq:MarkovParam}, we can keep $\Kfnc$, $\Cfnc$, and $\Dfnc$ to be parameter dependent.

\section{Identification of a MAX model} \label{sec:MaxIdent} 

\subsection{Problem Setting} \label{sec:MaxProblemSet} 

The Markov coefficients~\eqref{eq:MarkovParam} of the impulse response~\eqref{eq:FIRmodels} representing the LPV-SS system~\eqref{eq:SSrep} are functions in the scheduling signal and its time shifts, i.e.,~\eqref{eq:subMarkovParam}. Hence~\eqref{eq:FIRmodels} is a \emph{moving average with exogenous inputs} (MAX) system: \vspace{-1mm}
\begin{equation}
y = \Bflt(p,q^{-1})u + \Cflt(p,q^{-1})e, \label{eq:MAXsys}
\end{equation} \vskip -1.5mm \noindent
with the process and noise filter given by \vspace{-1mm}
\begin{equation}
\Bflt(p,q^{-1})\!=\!\sum_{i=0}^\NB\Bflt_i(p)q^{-i},~~~\Cflt(p,q^{-1})\!=\!\sum_{j=0}^\NC\Cflt_j(p)q^{-j}, \label{eq:MAXflt}
\end{equation} \vskip -1.5mm \noindent
where $\Bflt_i(p) = (g_i\diamond p)$ and $\Cflt_j(p) = (h_j\diamond p)$. Hence, $\Bflt_i(p)$ and $\Cflt_i(p)$ are multi-linear matrix functions in the scheduling signal and its times-shifts from $t,\ldots,t-i$, similar to the Markov coefficients~\eqref{eq:subMarkovParam}. In this paper, the LPV-MAX model~\eqref{eq:MAXflt} is identified by using the PEM setting. To avoid identifiability issues,~\eqref{eq:SSrep} with functional dependencies~\eqref{eq:sysMatrices} is assumed to be structurally state observable and state reachable w.r.t. both $u$ and $e$ in the dertiministic sense\footnote{There exists at least one $p\in\mathbb{P}^{\mathbb{Z}}$ such that $n_{\mathrm{x}}$-step observability and reachability holds for all time moments on the support of the signals~\cite{Toth2010a}.}, which implies joint minimality of \eqref{eq:SSrep}.

In order to apply the PEM framework, we select the \emph{model structure} similar to~\eqref{eq:MAXflt}: \vspace{-2mm}
\begin{equation}
y = \Bflth(\theta,p,q^{-1})u + \Cflth(\theta,p,q^{-1})\varepsilon(\theta), \label{eq:MAXmodel}
\end{equation} \vskip -1.5mm \noindent
where $\varepsilon(\theta):\sZ\rightarrow\sY$ is the one-step-ahead prediction error and the process $\Bflth$ and noise $\Cflth$ models are considered to be polynomials of $q^{-1}$ with $p$-dependent coefficients similar to $\Bflt$ and $\Cflt$ in~\eqref{eq:MAXflt} with orders $\NBh$, $\NCh$, respectively. These polynomials are parametrized in terms of the sub-Markov parameters resulting in an overall parameter vector $\theta$.

The data-generating system~\eqref{eq:SSrep} is aimed to be identified using a given identification dataset $\Dat=\{u(t),p(t),y(t)\}_{t=1}^N$ generated by~\eqref{eq:SSrep} and minimizing the following $\ell_2$ \textit{loss function} \vspace{-2mm}
\begin{equation}
V_N(\theta) = \Tr\left( \frac{1}{N} \sum_{t=1}^N \varepsilon(\theta,t) \varepsilon^{\!\top}\!\!(\theta,t)\right) \label{eq:finiteLossFunc}
\end{equation} \vskip -1.5mm \noindent
where $N$ is the number of data points collected in $\Dat$. Under weak regularity conditions it is well known that\vspace{-2mm}
\begin{equation} 
V_N(\theta)\!\rightarrow\!V_\infty(\theta) \!=\! \lim_{N\!\rightarrow\infty} \frac{1}{N}\! \sum_{t=1}^N \! \Tr\!\left( \expct\left\{ \varepsilon(\theta,t) \varepsilon^{\!\top}\!\!(\theta,t)\right\} \!\right)\!, \label{eq:infiniteLossFunc}
\end{equation} \vskip -1.5mm \noindent
with probability one and uniformly in $\theta$~\cite{Ljung1989}, which removes the effect of sample based realizations of $u$, $p$, and $e$ in the analysis. Due to the uniform convergence of~\eqref{eq:infiniteLossFunc}, the results of Section~\ref{subsec:globalMin} and~\ref{subsec:stationaryPoint} give information about the shape of $V_N(\theta)$ for large $N$.

First, let us define a notation to indicate which sub-Markov parameters~\eqref{eq:subMarkovParam} of the filters $\Bflt$, $\Cflt$ are selected. Denote with $\sI_s^v$ the set $\{s,s+1,\ldots,v\}$. Then, $[\sI_s^v]^n$ defines the set of all sequences of the form $(i_1,\ldots,i_n)$ with $i_1,\ldots,i_n\in\sI_s^v$. The elements of $\sI_s^v$ will be viewed as characters and the finite sequences of elements of $\sI_s^v$ will be referred to as strings. Then $[\sI_s^v]^n$ is the set of all strings containing exactly $n$ characters. Then a selection with $n\geq0$ is constructed from $\eta\in\left[\sI_0^\NP\right]^n_0$ with $\left[\sI_0^\NP\right]^n_0=\{\epsilon\} \cup \sI_0^\NP\cup\ldots\cup\left[\sI_0^\NP\right]^n$ and $\epsilon$ denoting the empty string. As an example, $\left[\sI_0^1\right]^2_0=\{\epsilon,0,1,00,01,10,11\}$. Define by $\numstr{\eta}$ the amount of characters of a single string in the set. With this notation, we will simplify the notation of the sub-Markov parameters as \vspace{-2mm}
\begin{equation}
C_{\left[\eta_{\mathrm{b}}\right]_1}A_{\left[\eta_{\mathrm{b}}\right]_2}\cdots A_{\left[\eta_{\mathrm{b}}\right]_{i-1}}B_{\left[\eta_{\mathrm{b}}\right]_{i}} = \Bpar_{\eta_{\mathrm{b}}}, \label{eq:MAXpar}
\end{equation} \vskip -1.5mm \noindent
where $i=\numstr{\eta_{\mathrm{b}}}$, $\left[\eta\right]_j$ denotes the $j$-th character of the string $\eta$ and $\eta_{\mathrm{b}}\in\left[\sI_0^\NP\right]^{\NB+1}_2$. Hence, $\left[\sI_0^\NP\right]^{\NB+1}_2$ is the set of all possible combinations of the sub-Markov parameters for the process filter. Similarly, we indicate a sub-Markov parameter of the noise filter  $\Cflt$, process model $\Bflth$, and noise model $\Cflth$ by $\eta_{\mathrm{c}}\in\left[\sI_0^\NP\right]^{\NC+1}_2$, $\eta_{\hat{\mathrm{b}}}\in\left[\sI_0^\NP\right]^{\NBh+1}_2$, and $\eta_{\hat{\mathrm{c}}}\in\left[\sI_0^\NP\right]^{\NCh+1}_2$, respectively. Then define\footnote{The inverse filter $\Gamfnc(\theta,p,\!q^{-1})$ of $\Cflth(\theta,p,q^{-1})$ in~\eqref{eq:dfnInvCflth} is generated by \\ $\Gamfnc(\theta,p,\!q^{-1})=\sum_{i=0}^\infty\big( I - \Cflth(\theta,p,q^{-1}) \big)^i$~\cite{Toth2012a}.} \vspace{-2mm}
\begin{subequations} \label{eq:dfnMA}
\begin{align}
Q(\theta)&=\bar\expct\left\{ \varepsilon(\theta,t) \varepsilon(\theta,t)^\top \right\} \\
\Gamfnc(\theta,p,q^{-1}) & = \sum_{i=0}^\infty \hat\Gamma_i(\theta,p,t)q^{-i} \nonumber \\ 
						   & = \Cflth^{-1}(\theta,p,q^{-1}),~(\Gamma_0\!=\!I) \label{eq:dfnInvCflth} \\
R_k(\theta,t)&=\varepsilon(\theta,t-k) \varepsilon^{\!\top}\!(\theta,t),~~k\geq0, \label{eq:dfnMA-autoCor}
\end{align} \vskip -1.5mm \noindent
\end{subequations} 
where  \vspace{-1mm}
\begin{equation*}
\bar\expct\left\{\cdot\right\} = \lim_{N\rightarrow\infty} \frac{1}{N} \sum_{t=1}^N \expct\left\{\cdot\right\}.
\end{equation*}  \vskip -1.5mm \noindent
Remark, we use a simplified notation $\hat\Gamma_i(\theta,p,t)=(\hat\Gamma_i(\theta)\diamond p)(t)$ where each $\hat\Gamma_i(\theta,p,t)$ depends on $p(t),\ldots,p(t-i)$.

\vspace{-2mm}
\subsection{Pseudo linear regression} 
The advantage of the given problem setting of Section~\ref{sec:MaxProblemSet} is that it can easly be solved by applying global pseudo linear regression methods, e.g.,~\cite[Algorithm 3]{Toth2012a}. Pseudo linear regression has a relatively low computational load. Due to space limitations, the algorithm is not presented here. If we take $\NBh$ and $\NCh$ to be finite, the corresponding LPV-MAX model can be seen as an extended LTI model. In the LTI case, pseudo linear regression has been studied extensively and it is known that it will converge if no auto-regressive part is present~\cite{Ljung1975}. However, a key part of the convergence is to have only one unique solution of the minimization problem~\eqref{eq:infiniteLossFunc}, which is analysed in Sections~\ref{subsec:globalMin} and~\ref{subsec:stationaryPoint}.

\subsection{Global minima} \label{subsec:globalMin}

To show that a global minimum exists, the following assumption is taken:
\begin{enumerate}[label=\bfseries A\arabic*,ref=A\arabic*]
 	\item\label{ass:input1} The input signal $u$ is uncorrelated to the noise signal $e$.
\end{enumerate}
With the aforementioned identification setting, we get:
\begin{thm} \label{thm:globalMin}
Given the data generating system~\eqref{eq:MAXsys} with functional dependencies~\eqref{eq:MAXflt}, model structure~\eqref{eq:MAXmodel}, $\NBh\geq\NB$, $\NCh\geq\NC$, and $u$ satisfying~\ref{ass:input1}; then $V_N(\theta)$ has a global minimum \vspace{-2mm}
\begin{equation}
\min_\theta V_\infty(\theta) = \Tr(\Sigma_{\mathrm{e}}), \label{eq-thm:minimization}
\end{equation} \vskip -1.5mm \noindent
which can be obtained if the elements of $\theta$ satisfy\vspace{-2mm}
\begin{subequations} \label{eq-thm:parmCons}
\begin{equation} 
\Bparh_{\eta_{\mathrm{b}}} = \Bpar_{\eta_{\mathrm{b}}},~~\Cparh_{\eta_{\mathrm{c}}} = \Cpar_{\eta_{\mathrm{c}}}, \label{eq-thm:parmCons1}
\end{equation}\vskip -0.5mm \noindent
for all indices $\eta_{\mathrm{b}}\!\in\!\left[\sI_0^\NP\right]^{\NB\!+\!1}_2$ and $\eta_{\mathrm{c}}\!\in\!\left[\sI_0^\NP\right]^{\NC\!+\!1}_2$. In addition, if $\NBh>\NB$, $\NCh>\NC$,\vspace{-2mm}
\begin{equation} 
\Bparh_{\eta_{\hat{\mathrm{b}}}} = 0,~~~\Cparh_{\eta_{\hat{\mathrm{c}}}} = 0, \label{eq-thm:parmCons2}
\end{equation} \vskip -0.5mm \noindent
for all $\eta_{\hat{\mathrm{b}}}\!\in\!\left[\sI_0^\NP\right]^{\NBh\!+\!1}_{\NB\!+\!2}$ and $\eta_{\hat{\mathrm{c}}}\!\in\!\left[\sI_0^\NP\right]^{\NCh\!+\!1}_{\NC\!+\!2}$. \hfill $\square$
\end{subequations}
\end{thm}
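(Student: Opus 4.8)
The plan is to prove that the asymptotic loss $V_\infty(\theta)=\Tr(Q(\theta))$ is bounded below by $\Tr(\Sigma_{\mathrm{e}})$ for every $\theta$, and that this bound is met precisely at the parameter values \eqref{eq-thm:parmCons}; together these give the minimal value \eqref{eq-thm:minimization} and the claimed sufficient conditions. First I would bring the prediction error into filtered form. Using the convergent inverse filter $\Gamfnc=\Cflth^{-1}$ of \eqref{eq:dfnInvCflth} and substituting the data-generating system \eqref{eq:MAXsys} for $y$ into the model \eqref{eq:MAXmodel}, the one-step-ahead error splits as $\varepsilon(\theta)=\varepsilon_u(\theta)+\varepsilon_e(\theta)$ with
\begin{equation*}
\varepsilon_u(\theta)=\Gamfnc\big(\Bflt-\Bflth\big)u, \qquad \varepsilon_e(\theta)=\Gamfnc\,\Cflt\, e,
\end{equation*}
where the $p$- and $q^{-1}$-arguments are suppressed. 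Since $\Gamfnc$ is invertible with $\Gamma_0=I$, the input-driven term $\varepsilon_u$ vanishes identically exactly when $\Bflth=\Bflt$ as polynomials in $q^{-1}$ with $p$-dependent coefficients.

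Next I would decouple the two contributions in $Q(\theta)=\bar\expct\{\varepsilon\varepsilon^\top\}$. Because $\varepsilon_u$ is linear in $u$ and $\varepsilon_e$ is linear in the zero-mean $e$, Assumption~\ref{ass:input1} (with $p$ treated as a known signal, or at least independent of $e$) makes the cross terms $\bar\expct\{\varepsilon_u\varepsilon_e^\top\}$ vanish, so $Q(\theta)=Q_u+Q_e$ with $Q_u=\bar\expct\{\varepsilon_u\varepsilon_u^\top\}\succeq0$ and $Q_e=\bar\expct\{\varepsilon_e\varepsilon_e^\top\}\succeq0$. The key step is a lower bound on the noise part. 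Expanding the combined noise filter as $\Gamfnc\,\Cflt=\sum_{k=0}^\infty F_k\,q^{-k}$ with $F_0=I$ (both $\Cflt$ and $\Cflth$ have identity zeroth coefficient), the whiteness of $e$ yields
\begin{equation*}
Q_e=\sum_{k=0}^\infty \bar\expct\{F_k\,\Sigma_{\mathrm{e}}\,F_k^\top\}=\Sigma_{\mathrm{e}}+\sum_{k=1}^\infty \bar\expct\{F_k\,\Sigma_{\mathrm{e}}\,F_k^\top\}\succeq\Sigma_{\mathrm{e}},
\end{equation*}
each summand being positive semidefinite. Hence $V_\infty(\theta)=\Tr(Q_u)+\Tr(Q_e)\geq\Tr(\Sigma_{\mathrm{e}})$, and equality forces $Q_u=0$ and $F_k=0$ for $k\geq1$, i.e. $\Bflth=\Bflt$ and $\Gamfnc\,\Cflt=I\Leftrightarrow\Cflth=\Cflt$. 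This inequality is the LPV analogue of the classical multivariable moving-average argument of~\cite{Stoica1982}, and I expect the main obstacle to lie exactly here: in the LPV case the coefficients $F_k$ are multi-linear in $p$ and its time-shifts and $\Gamfnc$ is an infinite series, so one must justify the coefficient expansion together with its convergence and carefully handle the $p$-dependence of the $F_k$ under $\bar\expct$.

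Finally I would verify attainability and translate the filter identities into \eqref{eq-thm:parmCons}. The order conditions $\NBh\geq\NB$ and $\NCh\geq\NC$ guarantee that the model structure can represent $\Bflt$ and $\Cflt$; equating the $q^{-1}$-powers and the multi-linear $p$-monomials of $\Bflth=\Bflt$ and $\Cflth=\Cflt$ coefficient-by-coefficient gives $\Bparh_{\eta_{\mathrm{b}}}=\Bpar_{\eta_{\mathrm{b}}}$ and $\Cparh_{\eta_{\mathrm{c}}}=\Cpar_{\eta_{\mathrm{c}}}$ as in \eqref{eq-thm:parmCons1}, while the surplus orders yield the vanishing conditions \eqref{eq-thm:parmCons2}. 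At any such $\theta$ one has $\varepsilon=\Cflt^{-1}\Cflt e=e$, so $Q=\Sigma_{\mathrm{e}}$ and the lower bound is attained, establishing \eqref{eq-thm:minimization}. I would note in closing that reading the equality conditions in the converse direction — that these are the \emph{only} minimizers — additionally requires the mild persistence-of-excitation conditions on $u$ and $p$ ensuring that $\varepsilon_u=0$ and $\bar\expct\{F_k\Sigma_{\mathrm{e}}F_k^\top\}=0$ can hold only when the parameters themselves match; this converse is precisely what the stationary-point analysis of Section~\ref{subsec:stationaryPoint} builds on.
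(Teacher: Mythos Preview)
Your proposal is correct and follows essentially the same route as the paper's proof: write $\varepsilon=\Gamfnc[(\Bflt-\Bflth)u+\Cflt e]$, use~\ref{ass:input1} to split $V_\infty$ into an input part and a noise part, and bound the noise part below by $\Tr(\Sigma_{\mathrm{e}})$ via the monicity of $\Gamfnc\Cflt$. The only cosmetic difference is that the paper writes the noise residual as $\varepsilon_e=e+v$ with $v$ a function of $\{e(\tau)\}_{\tau<t}$ and then uses independence of $e(t)$ and $v(t)$, whereas you expand $\Gamfnc\Cflt=\sum_k F_k q^{-k}$ and sum $\bar\expct\{F_k\Sigma_{\mathrm{e}}F_k^\top\}$; these are the same computation, and you correctly flag that the $p$-dependence of the $F_k$ under $\bar\expct$ is where care is needed.
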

\begin{proof}
For notational ease, denote $\Bflt(p,q^{-1})$, $\Bflth(\theta,p,q^{-1})$, $\Cflt(p,q^{-1})$, $\Gamfnc(\theta,p,q^{-1})$, and $\varepsilon(\theta)$ as $\Bflt$, $\Bflth$, $\Cflt$, $\Gamfnc$, and $\varepsilon$, respectively. Then, using~\eqref{eq:MAXsys} and~\eqref{eq:MAXmodel}, let us rewrite $\varepsilon$ in terms of $u$ and $e$ as\vspace{-1mm}
\begin{equation*} 
\varepsilon = \Gamfnc\left[\left(\Bflt-\Bflth\right)u+ \Cflt e\right].
\end{equation*} \vskip -1.5mm \noindent
Hence, as $e$ and $u$ are uncorrelated (\ref{ass:input1}), the loss function~\eqref{eq:infiniteLossFunc} separates as \vspace{-2mm}
\begin{equation} \label{eq-prf:sepLoss}
V_\infty(\theta) = V_{\infty,1}(\theta)+V_{\infty,2}(\theta),
\end{equation} \vskip -1.5mm \noindent
with \vspace{-2mm}
\begin{subequations}
\begin{align}
V_{\infty,1}(\theta) &\!=\! \Tr\!\left[ \bar\expct\left\{\! \Gamfnc\!\left(\!\Bflt\!-\!\Bflth\!\right)\!u(t) u^{\!\top}\!\!(t)\!\left(\!\Bflt^{\!\top}\!\!-\Bflth^{\!\top}\!\right)\!\Gamfnc^{\!\top}\!\right\}\! \right]\!, \label{eq-prf:minProc} \\
V_{\infty,2}(\theta) &\!=\! \Tr\!\left[ \bar\expct\left\{ \Gamfnc\Cflt e(t) e^{\!\top}\!\!(t) \Cflt^\top\Gamfnc^\top\right\} \right]. \label{eq-prf:minNoise}
\end{align} \vskip -1.5mm \noindent
\end{subequations}
Obviously, $V_{\infty,1}(\theta)\geq0$ with equivalence if $\Bflt\equiv\Bflth$. For $V_{\infty,2}(\theta)$, remark that the filters $\Cflt$ and $\Cflth$ are monic, \vspace{-2mm}
\begin{equation}
\varepsilon = \Gamfnc\Cflt e = e + v,
\end{equation} \vskip -1.5mm \noindent
where the random signal $v(t)$ depends linearly on the past samples of $e$, i.e., $\{e(\tau)\}_{\tau=-\infty}^{t-1}$, but is independent of $e(t)$. Hence, it follows that \vspace{-2mm}
\begin{align}
V_{\infty,2}(\theta) &\!= \Tr\left( \bar\expct\left\{ (e(t) + v(t)) (e(t) + v(t))^\top\right\} \right) \nonumber \\
&\!= \! \Tr\left(\expct\{ e(t)e^{\!\top}\!(t)\}\right)\! +\!\! \Tr\left( \bar\expct\left\{ v(t)v^{\!\top}\!(t)\right\} \right) \nonumber \\
&\!\geq \Tr\left(\expct\{ e(t)e^{\!\top}\!(t)\}\right) = \Tr(\Sigma_{\mathrm{e}}). \label{eq-prf:globalMin}
\end{align} \vskip -1.5mm \noindent
Eq.~\eqref{eq-prf:globalMin} holds with equality if $v\equiv0$, which follows if $\Gamfnc\Cflt=I$ implying $\Cflt\equiv\Cflth$. Hence, $V_\infty(\theta)= \Tr(\Sigma_{\mathrm{e}})$ if~\eqref{eq-thm:parmCons} holds, which is a global minimum of $\min_\theta V_\infty(\theta)$.
\end{proof}

Theorem~\ref{thm:globalMin} does not imply uniqueness of the solution, only that a global minimum exists if the sub-Markov parameters of the model~\eqref{eq:MAXmodel} are equal to those of the original system~\eqref{eq:MAXflt}. Uniqueness of the solution is proven next.

\vspace{-2mm}
\subsection{Stationary point} \label{subsec:stationaryPoint}

Next, to show uniqueness of the of minimum in Theorem~\ref{thm:globalMin}, we prove that there exists only one stationary point of~\eqref{eq-thm:minimization}, hence, the global minimum of~\eqref{eq-thm:minimization} is unique and it can always be obtained by optimization methods. To prove this, the following additional assumptions are taken:
\begin{enumerate}[label=\bfseries A\arabic*,ref=A\arabic*,resume]
 	\item\label{ass:scheduling} Each signal $p_i$ is assumed to be a zero-mean white noise process with finite variance and independent of $e$. The processes $p_i$ are mutually independent and the higher order moments are bounded, given by $\expct\{p^k_ip^l_j\}=c_{i,j}^{k+j}<\infty$ for $i,j\in\sI_1^\NP$ and $2\geq k+j\geq6+\NCh$.
 	\item\label{ass:input} The given input $u$ is chosen such that the signal \vspace{-1mm}
 	\begin{equation}
 	\tilde u(t) = \left[\begin{array}{c} 1 \\ p(t-\NBh) \end{array} \right] \otimes \cdots \otimes \left[\begin{array}{c} 1 \\ p(t) \end{array} \right] \otimes u(t)
	\end{equation} \vskip -1.5mm \noindent
	is persistently exciting of order $\NY\!\!\sum_{i=1}^\NBh(1\!+\NP)^i\NU$~\cite{Ljung1989}.
\end{enumerate}
The first assumption, the choice of the scheduling signal, can be made less restrictive, however, it will make the analysis more involved. In this case, we restrict the scheduling signal to be stationary, hence, $\Tr( \bar\expct\{\cdot\})=\Tr( \expct\{\cdot\})$ for $V_{\infty,2}$ in~\eqref{eq-prf:minNoise}. The second assumption is needed to be able to uniquely distinguish all parameters in $\Bflth$, which is discussed in the proof of Theorem~\ref{thm:uniquenessMAX}.

As the loss function $V_{\infty}$ can be separated as given in~\eqref{eq-prf:sepLoss}, we will first focus on the analysis of the loss function w.r.t. the noise model, i.e., $V_{\infty,2}$~\eqref{eq-prf:minNoise}. The stationary points of~\eqref{eq-prf:minNoise} are solutions of the following equations \vspace{-1mm}
\begin{equation} \label{eq:infVder}
\frac{\partial V_{\infty,2}(\theta)}{\partial \theta_{ij,\eta_{\hat{\mathrm{c}}}}} = 0,~~1\leq i,j\leq\NY,~~\forall\eta_{\hat{\mathrm{c}}}\in\left[\sI_0^\NP\right]^{\NCh\!+\!1}_2,
\end{equation}
where $\theta_{ij,\eta_{\hat{\mathrm{c}}}}$ only contains the parameters of the noise model. Using the notation of~\eqref{eq:dfnMA}, the partial derivative~\eqref{eq:infVder} is
\begin{equation}
0=\Tr \frac{\partial\Tr Q(\theta)}{\partial Q(\theta)} \frac{\partial Q(\theta)}{\partial \theta_{ij,\eta_{\hat{\mathrm{c}}}}} = 2 \bar\expct\left\{\varepsilon^{\!\top}\!\!(\theta,t) \frac{\partial \varepsilon(\theta,t)}{\partial \theta_{ij,\eta_{\hat{\mathrm{c}}}}}\right\}. \label{eq:partialMA}
\end{equation}
Taking the partial derivative of~\eqref{eq:MAXmodel} w.r.t. the model parameters gives \vspace{-2mm}
\begin{equation}
0 = \Cflth(\theta,p,q^{-1})\frac{\partial \varepsilon(\theta,t)}{\partial \theta_{ij,\eta_{\hat{\mathrm{c}}}}}+s_is_j^\top p_{\eta_{\hat{\mathrm{c}}}}(t)\varepsilon(\theta,t-k) \label{eq:partialMAModel1}
\end{equation}
for each $t\in\sZ$, where $k=\numstr{\eta_{\hat{\mathrm{c}}}}-1$, $s_i$ is a selector vector for which only the $i$-th element is non-zero, and $p_{\eta_{\hat{\mathrm{c}}}}(t)$ defines the product of different scheduling signals and its time-shifts $p_{\eta_{\hat{\mathrm{c}}}}(t) = \prod_{i=0}^{\numstr{\eta_{\hat{\mathrm{c}}}}-1}p_{\left[\eta_{\hat{\mathrm{c}}}\right]_i}(t-i)$ and, for notational simplicity, $p_0(t)=1,~\forall t\in\sZ$. See that~\eqref{eq:partialMAModel1} is equivalent to \vspace{-1mm}
\begin{equation}
\frac{\partial \varepsilon(\theta,t)}{\partial \theta_{ij,\eta_{\hat{\mathrm{c}}}}} = -\Gamfnc(\theta,p,q^{-1})s_is_j^\top p_{\eta_{\hat{\mathrm{c}}}}(t)\varepsilon(\theta,t-k),\label{eq:partialMAModel2}
\end{equation}  \vskip -0.5mm \noindent
and substituting~\eqref{eq:partialMAModel2} in~\eqref{eq:partialMA} gives\vspace{-1mm}
\begin{align*} 
0 &= \bar\expct\!\left\{\!\varepsilon^{\!\top}\!(\theta,t)\sum_{r=0}^\infty \hat\Gamma_r(\theta,p,t)s_is_j^\top p_{\eta_{\hat{\mathrm{c}}}}(t-r)\varepsilon(\theta,t-k-r)\!\right\} \\
  &= s_j^\top \bar\expct\left\{\!\varepsilon(\theta,t-k)\sum_{r=0}^\infty\varepsilon^{\!\top}\!\!(\theta,t+r)\hat\Gamma_r(\theta,p,t)p_{\eta_{\hat{\mathrm{c}}}}(t-r)\!\right\}\!s_i,
\end{align*} \vskip -0.5mm \noindent
Hence, we find that~\eqref{eq:infVder} is equivalent to: \vspace{-1mm}
\begin{equation}
\bar\expct\!\left\{\!\sum_{r=0}^\infty \!R_{k+r}(\theta,t)\hat\Gamma_r(\theta,p,t)p_{\eta_{\hat{\mathrm{c}}}}(t-r)\!\right\} \!=\! 0,~\forall\eta_{\hat{\mathrm{c}}}\!\in\!\left[\sI_0^\NP\right]^{\NCh\!+\!1}_2\!\!, \label{eq:FindSol}
\end{equation} \vskip -0.5mm \noindent
with $k=\numstr{\eta_{\hat{\mathrm{c}}}}-1$. We would like to highlight that the analysis given next differs from the LTI case~\cite{Stoica1982} as $\varepsilon$ is not independent of $p$, e.g., $\hat\Gamma_r$ and $R_{k+r}$ are not independent in~\eqref{eq:FindSol}. Hence, this dependency makes the analysis more involved. Note that the time-shift operator $q$ is non-communicative if applied on a coefficient function, e.g., $q^{-1}\Cflt_i(p)=\Cflt_i(q^{-1}p)q^{-1}$~\cite{Toth2010}. Hence, for the following analysis, the time-instance will be indicated as $\Cflt_j(p,t-i)$ for $\Cflt_j(p(t-i),\ldots,p(t-i-j))$. First, decompose $R_{k+r}$ as:

\begin{lem} \label{lem:corrEps}
If~\ref{ass:scheduling} holds, then are the matrices $R_{r}$ for $r\geq1$ defined in~\eqref{eq:dfnMA-autoCor} given by \vspace{-2mm}
\begin{multline}\label{eq-lem:expautoCor}
\bar\expct\{ \!R_{r}(\theta,t)\! \} \!=\! \bar\expct\left\{\! e(t\!-\!r)e^{\!\top}\!(t\!-\!r) \Omega^{\!\top}\!(\theta,p,t\!-\!r)\bar{\Gamma}^\top_{r}(\theta,p,t)\! \right\}   \\
	+\bar\expct\Big\{ \sum_{i=1}^\infty \bar{\Gamma}_{i}(\theta,p,t-r)\Omega(\theta,p,t-i-r) e(t-r-i)\cdot \\
		e^{\!\top}\!(t-r-i) \Omega^{\!\top}\!(\theta,p,t-i-r) \bar{\Gamma}^\top_{i+r}(\theta,p,t) \Big\}
\end{multline} \vskip -1.5mm \noindent
with \vspace{-2mm}
\begin{subequations} \label{eq-lem:defBigMat}
\begin{align}
\bar{\Gamma}_k(\theta,p,t) &\!=\! \left[\! \begin{array}{ccc} \hat\Gamma_{k-1}(\theta,p,t)&\ldots&\hat\Gamma_{k-\NCh}(\theta,p,t) \end{array}\! \right], \\
\Omega(\theta,p,t) &\!=\! \left[\! \begin{array}{c}   \Cflt_1(p,t+1)-\Cflth_1(\theta,p,t+1) \\
										 \vdots \\ 
										 \Cflt_\NC(p,t+\NC)-\Cflth_\NC(\theta,p,t+\NC) \\
						 				-\Cflth_{\NC+1}(\theta,p,t+\NC+1) \\ 
						 				 \vdots \\
						 				-\Cflth_\NCh(\theta,p,t+\NCh)\end{array} \! \right]\!,
\end{align} \vskip -1.5mm \noindent
\end{subequations}
where $\Gamma_{l}(\cdot)=0$ for $l<0$. \hfill $\square$
\end{lem}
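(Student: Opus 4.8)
The plan is to first derive an explicit moving-average representation of the prediction error $\varepsilon$ driven only by the innovation $e$, and then to read off $\bar\expct\{R_r\}$ by matching the time indices of the quadratic-in-$e$ terms that survive the expectation. First I would reuse the identity from the proof of Theorem~\ref{thm:globalMin}: when the analysis is restricted to the noise contribution $V_{\infty,2}$, the error is $\varepsilon=\Gamfnc\Cflt e$. Since $\Cflt$ and $\Cflth$ are monic and $\Gamfnc\Cflth=I$, writing $\Gamfnc\Cflt = I+\Gamfnc(\Cflt-\Cflth)$ gives $\varepsilon = e+v$ with $v=\Gamfnc(\Cflt-\Cflth)e$. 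Expanding this filter product at time $t$ yields
\begin{equation*}
v(t)=\sum_{i=0}^\infty\sum_{j=1}^\NCh \hat\Gamma_i(\theta,p,t)\big(\Cflt_j-\Cflth_j\big)(p,t-i)\,e(t-i-j),
\end{equation*}
where the $j=0$ term drops out because both filters are monic and $\Cflt_j\equiv 0$ for $j>\NC$. The delicate point here is the non-commutativity $q^{-1}\Cflt_j(p)=\Cflt_j(q^{-1}p)q^{-1}$, which pins the argument of each coefficient to time $t-i$ rather than $t$.

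The key step is to regroup $v(t)$ by the total innovation delay $m=i+j$. For each $m\geq1$ the coefficient of $e(t-m)$ is $\sum_{j=1}^{\NCh}\hat\Gamma_{m-j}(\theta,p,t)(\Cflt_j-\Cflth_j)(p,t-m+j)$, with $\hat\Gamma_l=0$ for $l<0$. Comparing against~\eqref{eq-lem:defBigMat}, the $j$-th block of $\bar\Gamma_m(\theta,p,t)$ is precisely $\hat\Gamma_{m-j}(\theta,p,t)$ and the $j$-th block of $\Omega(\theta,p,t-m)$ is precisely $(\Cflt_j-\Cflth_j)(p,t-m+j)$, so this coefficient is exactly the matrix product $\bar\Gamma_m(\theta,p,t)\,\Omega(\theta,p,t-m)$. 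This produces the compact representation
\begin{equation*}
\varepsilon(\theta,t)=e(t)+\sum_{m=1}^\infty \bar\Gamma_m(\theta,p,t)\,\Omega(\theta,p,t-m)\,e(t-m),
\end{equation*}
which is the engine of the whole argument.

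Finally I would substitute this into $R_r(\theta,t)=\varepsilon(\theta,t-r)\varepsilon^{\top}(\theta,t)$ and take $\bar\expct$. Because $e$ is zero-mean, i.i.d., and (by~\ref{ass:scheduling}) independent of $p$, conditioning on the scheduling path gives $\bar\expct\{(\cdot)\,e(t-a)e^{\top}(t-b)\,(\cdot)\}=0$ whenever $a\neq b$; hence only pairings whose two innovation factors share the same time index contribute. For $r\geq1$ these are exactly: (i) the base term $e(t-r)$ of $\varepsilon(\theta,t-r)$ paired with the $m=r$ summand of $\varepsilon(\theta,t)$, giving $e(t-r)e^{\top}(t-r)\Omega^{\top}(\theta,p,t-r)\bar\Gamma_r^{\top}(\theta,p,t)$; and (ii) the diagonal matches $t-r-m=t-n$, i.e.\ $n=m+r$, between the two moving-average sums, which reproduces the series in $i=m$ of~\eqref{eq-lem:expautoCor}. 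Every remaining pairing forces $r=0$ or a negative delay and so does not contribute, and collecting (i) and (ii) is precisely~\eqref{eq-lem:expautoCor}.

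The main obstacle I expect is the time-shift bookkeeping in the middle step: since $\varepsilon$ is not independent of $p$, the scheduling-dependent coefficients $\hat\Gamma_i$ and $\Cflt_j-\Cflth_j$ must be propagated through the shifts with care so that, after regrouping by $e$-delay, the coefficient of $e(t-m)$ aligns blockwise with the stacked matrices $\bar\Gamma_m$ and $\Omega$ of~\eqref{eq-lem:defBigMat}. Once that alignment and the conditioning-on-$p$ argument are in place, the remaining combinatorics of which $e$-indices coincide is routine.
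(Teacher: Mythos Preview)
Your proposal is correct and follows essentially the same approach as the paper: both expand $\varepsilon=\Gamfnc\Cflt e$, use the independence of $e$ and $p$ from~\ref{ass:scheduling} to discard the off-diagonal $e$-products, and identify the surviving coefficients with $\bar\Gamma_m\Omega$. The only difference is order of operations: you invoke $\Gamfnc\Cflth=I$ upfront to write $\Gamfnc\Cflt=I+\Gamfnc(\Cflt-\Cflth)$ and obtain the representation $\varepsilon(t)=e(t)+\sum_m\bar\Gamma_m\Omega\,e(t-m)$ \emph{before} forming $R_r$, whereas the paper first computes $\bar\expct\{R_r\}$ in terms of the unsimplified blocks $\tilde\Gamma_i=[\hat\Gamma_i,\ldots,\hat\Gamma_{i-\NC}]$ and $\tilde\Cflt=[\Cflt_0^\top,\ldots,\Cflt_\NC^\top]^\top$ and only afterwards uses the recursion $\hat\Gamma_i+\sum_j\hat\Gamma_{i-j}\Cflth_j=0$ to reduce $\tilde\Gamma_i\tilde\Cflt$ to $\bar\Gamma_i\Omega$; your route is slightly more direct but otherwise equivalent.
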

\begin{proof}
See Appendix.
\end{proof}
Now we return to the uniqueness of $V_{\infty,2}$:

\begin{lem} \label{lem:uniquenessMA}
Given the data generating system~\eqref{eq:MAXsys} with functional dependencies~\eqref{eq:MAXflt}, model structure~\eqref{eq:MAXmodel} with $\NCh\geq\NC$, $\Bflt(p,q^{-1})=\Bflth(\theta,p,q^{-1})\equiv0$, then, under assumption~\ref{ass:scheduling} the global minimum for \vspace{-2mm}
\begin{equation}
\min_{\theta} V_{\infty,2}(\theta) = \Tr(\Sigma_{\mathrm{e}}),
\end{equation} \vskip -1.5mm \noindent
is the unique stationary point of $V_{\infty,2}$.\hfill $\square$
\end{lem}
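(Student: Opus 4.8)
The plan is to show first that the global minimum from Theorem~\ref{thm:globalMin} is a stationary point, and then that it is the only one. With $\Bflt=\Bflth\equiv0$ the prediction error reduces to $\varepsilon=\Gamfnc\Cflt e$, and the coefficient mismatch between the true and modelled noise dynamics is collected in the matrix $\Omega$ of~\eqref{eq-lem:defBigMat}. When $\Omega=0$ --- equivalently $\Cflt\equiv\Cflth$, so that $\Cflth_j=\Cflt_j$ for $j\leq\NC$ and $\Cflth_j=0$ for $\NC<j\leq\NCh$ --- Lemma~\ref{lem:corrEps} gives $\bar\expct\{R_r\}=0$ for all $r\geq1$, so the stationarity conditions~\eqref{eq:FindSol} hold; hence the global minimum $\Tr(\Sigma_{\mathrm{e}})$ of Theorem~\ref{thm:globalMin} is a stationary point. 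It remains to prove the converse: every solution of~\eqref{eq:FindSol} forces $\Omega=0$.

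To this end I would substitute the decomposition~\eqref{eq-lem:expautoCor} of $\bar\expct\{R_{k+r}\}$ into the stationarity equations~\eqref{eq:FindSol} and reduce the resulting expectations using Assumption~\ref{ass:scheduling}. Since $e$ is zero-mean, white, and independent of $p$, each factor of the form $e(\cdot)e^{\!\top}\!(\cdot)$ contributes $\Sigma_{\mathrm{e}}$ only when its two time arguments coincide and vanishes otherwise, collapsing the infinite inner sum of~\eqref{eq-lem:expautoCor} to the finitely many terms with matched innovation indices. Independently, because each $p_i$ is zero-mean, white, mutually independent, and independent of $e$, the scheduling product $p_{\eta_{\hat{\mathrm{c}}}}(t-r)$ together with the scheduling monomials hidden inside $\Omega$ and $\bar\Gamma$ factors out as a product of moments $c_{i,j}^{k+l}$. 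This decouples the family~\eqref{eq:FindSol} string by string, leaving for each index $\eta_{\hat{\mathrm{c}}}$ an algebraic relation among the coefficient differences $\Cflt_j-\Cflth_j$ and the spurious coefficients $\Cflth_j$, i.e., among the rows of $\Omega$.

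The final step is to show that the resulting homogeneous system admits only the trivial solution $\Omega=0$. Here I would exploit the monic structure $\Gamma_0=I$, so that $\bar\Gamma$ contributes an identity at its leading entry, to expose a triangular dependence on the lag index $k=\numstr{\eta_{\hat{\mathrm{c}}}}-1$: the equations indexed by the longest strings isolate the highest-order coefficient $\Cflth_{\NCh}$ and force it to zero, after which the equations for successively shorter strings peel off $\Cflth_{\NCh-1},\Cflth_{\NCh-2},\dots$ and finally the genuine mismatches $\Cflt_j-\Cflth_j$ down to lag one. Nonsingularity at each stage rests on $\Sigma_{\mathrm{e}}$ being nonsingular and on the scheduling moments $c_{i,j}^{k+l}$ being well defined and nondegenerate, exactly as guaranteed by Assumption~\ref{ass:scheduling}, whose moment bound up to order $6+\NCh$ ensures all expectations appearing after substitution are finite. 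Concluding $\Omega=0$ yields $\Cflt\equiv\Cflth$, so the stationary point coincides with the global minimum of Theorem~\ref{thm:globalMin}, establishing uniqueness.

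I expect the last step to be the main obstacle in the LPV setting. Unlike the LTI proof of~\cite{Stoica1982}, the inverse-filter coefficients $\hat\Gamma_r$ depend on $p$ and are statistically coupled to the residual through $\Omega$, so the bookkeeping of which scheduling moments survive the expectations --- and the verification that the reduced coefficient matrix is genuinely nonsingular rather than accidentally rank-deficient --- is where the real difficulty lies. The whiteness and mutual independence of the $p_i$ in Assumption~\ref{ass:scheduling} is precisely what keeps this tractable, turning the surviving expectations into products of the moments $c_{i,j}^{k+l}$ rather than an intractable tangle of cross-correlations.
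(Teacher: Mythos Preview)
Your high-level strategy matches the paper's: substitute the decomposition of Lemma~\ref{lem:corrEps} into the stationarity conditions~\eqref{eq:FindSol} and exploit Assumption~\ref{ass:scheduling} to kill cross-terms. Where you diverge is in the mechanism that finally forces $\Omega=0$. You propose a triangular back-substitution on the lag $k=\numstr{\eta_{\hat{\mathrm{c}}}}-1$, claiming the longest-string equations isolate $\Cflth_{\NCh}$. The paper instead restricts to the sub-family $\eta_{\hat{\mathrm{c}}}\in[0]^{\NCh+1}_2$ (so $p_{\eta_{\hat{\mathrm{c}}}}\equiv1$), rewrites $\Omega(\theta,p,t-j)=(\theta_{\Cpar}-\hat\theta_{\Cparh})P(t-j)$ linearly in the sub-Markov parameters, and argues that the resulting quadratic form $(\theta_{\Cpar}-\hat\theta_{\Cparh})\Lambda(\theta_{\Cpar}-\hat\theta_{\Cparh})^{\top}$ with $\Lambda=\expct\{P e e^{\top}P^{\top}\}\succ0$ under~\ref{ass:scheduling} vanishes only at $\theta_{\Cpar}=\hat\theta_{\Cparh}$. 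This is a single positive-definiteness argument rather than an induction on lag.

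Your triangular step has a gap as stated. For $k=\NCh$ the leading ($r=0$) contribution after substituting~\eqref{eq-lem:expautoCor} is $\Sigma_{\mathrm{e}}\,\Omega^{\top}\bar\Gamma_{\NCh}^{\top}\,p_{\eta_{\hat{\mathrm{c}}}}$, and $\bar\Gamma_{\NCh}=[\hat\Gamma_{\NCh-1},\dots,\hat\Gamma_0]$ multiplies \emph{every} block row of $\Omega$, not just the last one; the block $-\Cflth_{\NCh}$ is hit by $\hat\Gamma_0=I$, but the lower-order mismatches $\Cflt_j-\Cflth_j$ are simultaneously hit by $\hat\Gamma_{\NCh-j}$, which themselves depend polynomially on the $\Cflth_l$. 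So the longest-string equations do not cleanly isolate $\Cflth_{\NCh}$, and the induction you sketch does not start without an additional argument excluding cancellation among these coupled terms. The paper's positive-definiteness route avoids this bookkeeping by reducing everything to a single nonsingular Gram-type matrix $\Lambda$; if you want to keep the back-substitution idea you will need to justify why the coupled lower-order terms cannot conspire to cancel, which is essentially the same hard step.
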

\begin{proof}
See that~\eqref{eq-lem:expautoCor} can be substituted in~\eqref{eq:FindSol} as $e$ is independent from $p$. Hence, \vspace{-2mm}
\begin{multline} \label{eq-prf:FindSol}
\bar\expct\Big\{\!\sum_{r=0}^\infty \Big[ e(t-r-k)e^{\!\top}\!(t-r-k) \Omega^{\!\top}\!(\theta,p,t-r-k)\bar{\Gamma}^\top_{\!r-k}(\theta,p,t) \\
	\! + \!\sum_{i=1}^\infty\! \bar{\Gamma}_{\!i}(\theta,p,t-r-k)\Omega(\theta,p,t-i-r-k) e(t-r-i)e^{\!\top}\!\!(t-i-r-k) \\
		 \Omega^{\!\top}\!\!(\theta,p,t-i-r-k) \bar{\Gamma}^{\!\top}_{\!i+r-k}(\theta,p,t)\!\Big]\!\hat\Gamma_r(\theta,p,t)p_{\eta_{\hat{\mathrm{c}}}}(t-r)\!\Big\} \!=\! 0, 
\end{multline} \vskip -2.5mm \noindent
for all $\eta_{\hat{\mathrm{c}}}\in\left[\sI_0^\NP\right]^{\NCh\!+\!1}_2$ and $k=\numstr{\eta_{\hat{\mathrm{c}}}}-1$. Remark that $\Cparh_{\eta_{\hat{\mathrm{c}}}}=0$ cannot be a solution of~\eqref{eq-prf:FindSol} as $\hat\Gamma_0=I$. Also see that the maximal amount of products of $p$ with equivalent time-shift is $6+\NCh$ in~\eqref{eq-prf:FindSol}. Therefore, in~\ref{ass:scheduling}, we have that all moments of $p$ up to the $6+\NCh$ moment are needed to be bounded. To treat~\eqref{eq-prf:FindSol} in the rest of the proof would require additional technical details, hence, instead we present just the concept of the proof w.r.t. the restricted case of $\eta_{\hat{\mathrm{c}}}\in\left[0\right]^{\NCh\!+\!1}_2$ such that $p_{\eta_{\hat{\mathrm{c}}}}(t-r)=1$, then, due to the construction of $\hat\Gamma_i$, it is impossible that summations and products of $\hat\Gamma_i$ cancel each other, which could lead to the zero solution of~\eqref{eq-prf:FindSol}. Hence, the solution can only be found for \vspace{-1mm}
\begin{equation}
\bar\expct\!\left\{\!\Omega(\theta,p,t-j) e(t-j)e^{\!\top}\!(t-j) \Omega^{\!\top}\!(\theta,p,t-j) \!\right\}\!=\!0,
\end{equation} \vskip -1.5mm \noindent
for $j>0$. We can rewrite $\Omega$ as \vspace{-2mm}
\begin{equation} \label{eq-prf:fromLPVtoLTI}
\Omega(\theta,p,t-j) = (\theta_{\Cpar}-\hat\theta_{\Cparh})P(t-j),
\end{equation} \vskip -1.5mm \noindent
with \vspace{-2mm}
\begin{align*}
\Cpar^{[n]} &\!=\! \left[ ~\Cpar_{0\ldots00} ~ \Cpar_{0\ldots01} ~ \mbox{{\small\dots}}~  \Cpar_{0\ldots0\NC} ~  \Cpar_{0\ldots10} ~ \mbox{{\small\dots}} ~ \Cpar_{\NC\ldots\NC\NC} ~\right], \\
\theta_{\Cpar} &\!=\! \left[ \begin{array}{ccccc} 
\Cpar^{[1]}   & 0 & &\ldots & 0 \\
\vdots 	      &   & & \ldots & 0 \\
\Cpar^{[\NC]} &   & 0& \ldots & 0 \\
0		      &  & & \ldots & 0
\end{array} \right],
\end{align*} \vskip -1.5mm \noindent
%
%
where $\Cpar^{[n]}\in\mathbb{R}^{\NY\times\NY(1+\NP)^{n+1}}$ denotes the matrix of all sub-Markov parameters associated with the $h_n$-th Markov coefficient. For $\theta_{\Cpar}$, the last $\NC-\NCh$ rows and columns are zero, such that $\theta_{\Cpar}$ and $\hat\theta_{\Cparh}$ have equivalent dimensions. The matrix $\hat\theta_{\Cparh}$ is similarly parametrized as $\theta_{\Cpar}$, however, with $\Cparh$, $\NCh$ in stead of $\NC$, $\Cpar$, and \vspace{-2mm}
\begin{equation*}
P(t-j) = \left[\begin{array}{c} 1 \\ p(t-j+\NCh) \end{array} \right] \otimes \cdots \otimes \left[\begin{array}{c} 1 \\ p(t-j) \end{array} \right] \otimes I_{ny},
\end{equation*} \vskip -1.5mm \noindent
where $\otimes$ is the Kronecker product. Combing gives \vspace{-1mm}
\begin{equation*}
(\theta_{\Cpar}-\hat\theta_{\Cparh})\!\underbrace{\expct\!\left\{\!P(t-j) e(t-j)e^{\!\top}\!(t-j) P^{\!\top}\!\!(t-j)\!\right\}}_{\Lambda}\! (\theta_{\Cpar}-\hat\theta_{\Cparh})^{\!\!\top}\!\!=\!0,
\end{equation*} \vskip -1.5mm \noindent
where $\Lambda$ is clearly positive definite if~\ref{ass:scheduling} holds, therefore, $\theta_{\Cpar}=\hat\theta_{\Cparh}$ is the only solution for $p_{\eta_{\hat{\mathrm{c}}}}(t-r)=1$. If $p_{\eta_{\hat{\mathrm{c}}}}(t-r)\neq1$, then the proof is more involved, but relies on the above repeated concept. Hence, it is not presented here. Concluding, $\theta_{\Cpar}=\hat\theta_{\Cparh}$ is the only solution for which the set of equations~\eqref{eq:FindSol} is satisfied. Therefore, it is the only unique stationary point of $V_{\infty,2}$.
\end{proof}

Lemma~\ref{lem:uniquenessMA} is the LPV extension of the LTI result of~\cite{Stoica1982}. Using this preliminary analysis the following result holds:
\begin{thm} \label{thm:uniquenessMAX}
Given the data generating system~\eqref{eq:MAXsys} with functional dependencies~\eqref{eq:MAXflt}, model structure~\eqref{eq:MAXmodel}, $\NBh\geq\NB$, $\NCh\geq\NC$, then under assumptions~\ref{ass:input1}-\ref{ass:input} the global minimum \vspace{-1mm}
\begin{equation}
\min_{\theta} V_{\infty}(\theta) = \Tr(\Sigma_{\mathrm{e}}),
\end{equation} \vskip -1.5mm \noindent
is the only stationary point of $V_{\infty}$ and is found if~\eqref{eq-thm:parmCons} is satisfied. \hfill $\square$
\end{thm}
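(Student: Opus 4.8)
The plan is to exploit the additive separation $V_\infty(\theta)=V_{\infty,1}(\theta)+V_{\infty,2}(\theta)$ from \eqref{eq-prf:sepLoss}, which holds under \ref{ass:input1}, and to pin down the process and noise parameters in turn. Partition the parameter vector as $\theta=(\theta_{\mathrm{b}},\theta_{\mathrm{c}})$, where $\theta_{\mathrm{b}}$ gathers the sub-Markov parameters of $\Bflth$ and $\theta_{\mathrm{c}}$ those of $\Cflth$. Note that $V_{\infty,2}$ in \eqref{eq-prf:minNoise} depends only on $\theta_{\mathrm{c}}$ (through $\Gamfnc=\Cflth^{-1}$), whereas $V_{\infty,1}$ in \eqref{eq-prf:minProc} depends on both blocks, since the same inverse filter $\Gamfnc$ multiplies $\Bflt-\Bflth$. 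This coupling of $\theta_{\mathrm{c}}$ into the process term is the only real obstacle: a naive separate minimization is not justified, and disentangling it is the crux of the argument.

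First I would treat stationarity in the process directions. Because $V_{\infty,2}$ is independent of $\theta_{\mathrm{b}}$, at any stationary point $\partial V_\infty/\partial\theta_{\mathrm{b}}=\partial V_{\infty,1}/\partial\theta_{\mathrm{b}}=0$. Holding $\theta_{\mathrm{c}}$ at its stationary value fixes a filter $\Gamfnc$ which is invertible since $\Cflth$ is monic ($\Gamma_0=I$); with $\Gamfnc$ fixed, $V_{\infty,1}$ is a nonnegative quadratic in $\theta_{\mathrm{b}}$ that vanishes precisely when $\Bflth\equiv\Bflt$. Its homogeneous part is positive definite: invertibility of $\Gamfnc$ as a causal operator makes $\theta_{\mathrm{b}}\mapsto\Gamfnc\Bflth u$ injective in mean square exactly when $\theta_{\mathrm{b}}\mapsto\Bflth u$ is, and the latter is guaranteed by the persistency of excitation of the augmented regressor $\tilde u$ postulated in \ref{ass:input} --- this is precisely why the excitation order in \ref{ass:input} is chosen to match the number of process sub-Markov parameters. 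Hence $V_{\infty,1}(\cdot,\theta_{\mathrm{c}})$ is strictly convex and its unique stationary point is $\Bflth\equiv\Bflt$, yielding the process equalities in \eqref{eq-thm:parmCons1} together with the vanishing surplus terms in \eqref{eq-thm:parmCons2} when $\NBh>\NB$.

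The coupling then dissolves, which is the key step. At such a stationary point $\Bflth\equiv\Bflt$ forces $V_{\infty,1}=0$; since $V_{\infty,1}\ge0$ everywhere, this is its global minimum over the \emph{whole} vector $\theta$, so its gradient vanishes in every direction, in particular $\partial V_{\infty,1}/\partial\theta_{\mathrm{c}}=0$. The noise stationarity $\partial V_\infty/\partial\theta_{\mathrm{c}}=\partial V_{\infty,1}/\partial\theta_{\mathrm{c}}+\partial V_{\infty,2}/\partial\theta_{\mathrm{c}}$ therefore reduces to $\partial V_{\infty,2}/\partial\theta_{\mathrm{c}}=0$. As $V_{\infty,2}$ is independent of the process block, Lemma~\ref{lem:uniquenessMA} applies verbatim (its hypothesis $\Bflt=\Bflth\equiv0$ merely isolates this noise subproblem) and, under \ref{ass:scheduling}, gives the unique solution $\Cflth\equiv\Cflt$, i.e.\ the noise equalities in \eqref{eq-thm:parmCons}. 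Collecting both blocks, every stationary point of $V_\infty$ must satisfy \eqref{eq-thm:parmCons}; it is therefore unique, and by Theorem~\ref{thm:globalMin} it attains the value $\Tr(\Sigma_{\mathrm{e}})$, which is the assertion.
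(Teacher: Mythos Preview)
Your proof is correct and uses the same ingredients as the paper---the separation \eqref{eq-prf:sepLoss}, Lemma~\ref{lem:uniquenessMA} for the noise part, and the persistency-of-excitation argument for the process part---but you assemble them in the reverse order, and this buys you something. The paper first invokes Lemma~\ref{lem:uniquenessMA} to pin down $\theta_{\mathrm{c}}$ and then treats $V_{\infty,1}$, implicitly assuming that a stationary point of $V_\infty$ is already a stationary point of $V_{\infty,2}$ alone; this is exactly the coupling you flag, and the paper does not spell out why $\partial V_{\infty,1}/\partial\theta_{\mathrm{c}}$ vanishes. By instead first exploiting $\partial V_{\infty,2}/\partial\theta_{\mathrm{b}}=0$ to force $\Bflth\equiv\Bflt$ for \emph{any} fixed $\theta_{\mathrm{c}}$ (via strict convexity of the quadratic, which holds because $\Gamfnc$ is invertible whenever $\Cflth$ is monic), you drive $V_{\infty,1}$ to its global minimum $0$ and then use the elementary fact that the gradient of a nonnegative function vanishes at a zero to kill $\partial V_{\infty,1}/\partial\theta_{\mathrm{c}}$. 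This cleanly decouples the noise stationarity condition and makes the application of Lemma~\ref{lem:uniquenessMA} fully justified. The paper's order can be repaired by the same observation, but your route makes the logic explicit.
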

\begin{proof}
Recall from~\eqref{eq-prf:sepLoss} that $V_{\infty} =V_{\infty,1}+V_{\infty,2}$. Lemma~\ref{lem:uniquenessMA} proofs that for $V_{\infty,2}$ there is only one stationary point. Hence, $\Cflt(p,q^{-1})=\Cflth(\theta,p,q^{-1})$ at this stationary point. The underlying filter $\Cflth$ is a polynomial in $q^{-1}$ with parameter-varying coefficients and it is monic; hence it is full rank in the functional sense and, therefore, its inverse $\Gamfnc$ is also full rank. Next, we need to prove that $V_{\infty,1}$ has only one stationary point. If we consider $\tilde u$ as the input signal, then the filters $\Bflt$ and $\Bflth$ can be written as a multi-input LTI filter, similar to~\eqref{eq-prf:fromLPVtoLTI}. The extended input signal $\tilde u$ is persistent of excitation of order $\NY\sum_{i=1}^\NBh(1+\NP)^i\NU$, thus $[\Bflt-\Bflth]u$ cannot be zero for $\Bflt\neq\Bflth$. Recall, $\Gamfnc(p,q^{-1})$ also cannot lose rank, hence, there exists only one stationary point of $V_{\infty,1}$ satisfying~\eqref{eq-thm:parmCons}, e.g., see~\cite{Ljung1989}. Therefore, the only stationary point of $V_{\infty}$ is found for~\eqref{eq-thm:parmCons}.
\end{proof}

Theorem~\ref{thm:uniquenessMAX} proofs that under the conditions~\ref{ass:input1}-\ref{ass:input} there can only exists one stationary point of~\eqref{eq:infiniteLossFunc}. Hence, identification of the LPV-MAX model~\eqref{eq:MAXmodel} is consistent and unbiased.

\section{LPV-SS realization} \label{sec:ho-kalman}

The next step in the proposed identification scheme is to realize the LPV-SS model from the identified LPV-MAX model. By treating the noise as an additional input, i.e., extending $ \tilde B_i \!=\! \left[\! \begin{array}{cc} B_i & K_i  \end{array} \!\right]$ for $i=0,\ldots,\NP$, an isomorphic to the original LPV-SS representation \eqref{eq:SSrep}\footnote{There exists a constant, nonsingular transformation matrix $T\in\mathbb{R}^{\NX\times\NX}$ such that: $T \hat A_i  = A_i T$, $T [\hat B_i~ \hat K_i] = [B_i~K_i]$, and $\hat C_i = C_i T$, $\forall i\in\sI_0^\NP$.} is obtained by employing the bases reduced Ho-Kalman realization scheme of~\cite{Cox2015}.
This bases reduced realization can considerably decrease the size of the Hankel matrix by selecting only its non-repetitive elements and, therefore, reducing the computational load, compared to realization on the full Hankel matrix~\cite[Eq. (48)]{Toth2012}. In the basis reduced realization, the SVD is only applied on a $\NO\times\NR$ matrix with $\NO,\NR\geq\NX$ instead of a matrix with size $\NY\sum_{l=1}^i(1+\NP)^l\times(\NU+\NY)\sum_{l=1}^j(1+\NP)^l$, for $i,j\geq2$ in the full realization case. The realization scheme allows to reconstruct the state bases of the LPV-SS model based upon only the process $\Bflt$ or noise model $\Cflt$, simply by only selecting their corresponding sub-Markov to fill the $\hank_{\selO,\selR}$ matrix in~\cite[Eq. (20)]{Cox2015}. This can be useful, if, for example, one of both models is poorly estimated then the other model can be used to more accurately reconstruct the reachability and observability matrix.

\vspace{-2mm}
\section{Simulation Example} \label{sec:example}

The proposed identification scheme is tested on the benchmark model used in~\cite{Verdult05}. The benchmark contains an MIMO LPV-SS model with input dimension $\NU=2$, scheduling dimension $\NP\!=\!2$, state dimension $\NX\!=\!2$, and output dimension $\NY\!=\!2$. We added the following parameter independent function \vspace{-2mm}
\begin{equation*}
\Kfnc = \left[ \begin{array}{cc}
0.32 & 0.16 \\ 
0.64 & 0.24
\end{array}  \right].
\end{equation*} \vskip -1.5mm \noindent
The simulation output or one-step-ahead predicted output $\hat{y}$ of the estimated model is compared to measured output or the one-step-ahead predicted output $y$ of the oracle, respectively, by means of the \textit{best fit rate} (BFR)\footnote{Usually the BFR are defined per channel. Eq.~\eqref{eq:BFR} are the average performance criteria over all channels.} \vspace{-1mm}
\begin{align}
\BFR&=\max\hspace{-1mm}\left\{\hspace{-0.5mm}1\hspace{-1mm}-\hspace{-1mm}\frac{\frac{1}{N}\hspace{-1mm}\sum_{t=1}^N\hspace{-1mm}\Vert y_t-\hat{y}_t\Vert_2}{\frac{1}{N}\hspace{-1mm}\sum_{t=1}^N\hspace{-1mm}\Vert y_t-\bar{y}\Vert_2}, 0\hspace{-0.5mm}\right\} \cdot 100\%,\label{eq:BFR}
\end{align} \vskip -1.5mm \noindent
using a validation dataset $\Dval$ as in~\cite{Verdult05}. In \eqref{eq:BFR}, $\bar{y}$ defines the mean of the simulation output or one-step-ahead predicted output $y$ of the oracle. In the realization step, the basis reduced scheme uses $\NR=10$, $\NO=8$ bases, where the controllability matrix is spanned by $\selR=\!\{\!(\epsilon,0,1),\!(\epsilon,0,2),\!(\epsilon,1,1),\!(0,0,1),\!(0,0,2),\!(1,0,1),\!(1,0,2),\\(2,0,1),\!(2,0,2),\!(1,1,1)\!\}$ and the observability is spanned by $\selO = \{ (1,0,\epsilon),\!(2,0,\epsilon),\!(1,0,0),\!(2,0,0),\!(1,0,1),\!(2,0,1),\\ (1,0,2),\!(2,0,2)\}$. The truncation order is $\NBh=4$ and $\NCh=2$. To evaluate the statistical properties of the identification scheme, $N_{\mathrm{MC}}=100$ Monte Carlo runs are carried out. In each run, a new realization of the input and scheduling signal is used. We use two different methods of finding the search direction in each of the pseudo linear regression (PLR) iterations:
\begin{enumerate*}[label=\roman*)]
	\item $\ell_2$ regularized least squares estimate, or
	\item enhanced Gaus-Newton optimization method~\cite{Wills2008}.
\end{enumerate*}
The extension is trivial and, therefore, is not given here. For the $\ell_2$ regularized least squares estimate, we optimized the regularization parameters by a line-search and found $\lambda_1=1$, $\lambda_2=100$ for $\SNR=\infty$dB and $\lambda_1=0.1$, $\lambda_2=1$ for $\SNR=\{40,10\}$dB ($\lambda_1$ is w.r.t. the process parameters and $\lambda_2$ w.r.t. the noise parameters). 

Table~\ref{tbl:total} displays the mean and standard deviation of the $\BFR$ of the identification algorithm for different $\SNR_y=\{\infty,40,10\}$dB. The table indicates that all approaches are capable of identifying the underlying dynamics. However, our approach is mildly outperformed by~\cite{Wingerden2009a}. The approache of~\cite{Wingerden2009a} has numerically efficient implementation, in terms of a kernel based approach, which can also be used in our case. In addition, we noticed that the iterations of the PLR are not very robust, as reflected by the increased standard deviation. On the other hand, with this paper, we would like to show how to conceptually reduce the amount of parameters used to identify the input-output model if $\Kfnc$, $\Cfnc$, and $\Dfnc$ are parameter varying. Hence, evolving to an numerical efficient implementation is for future research.

\begin{table*}
\caption{\footnotesize Mean and standard deviation (std) of the $\BFR$ of the identification algorithm per Monte-Carlo run for different $\SNR_y=\{\infty,40,10\}$dB. The $\BFR$ is based on the simulated output and one-step-ahead predicted output of the estimated model on the $\Dval$ for $N_{\mathrm{MC}}=100$ Monte-Carlo simulations. GN indicates the enhanced Gaus-Newton optimization method~\cite{Wills2008}. The PBSID has a past and future window equal to 3.} \label{tbl:total} \vspace{-2mm} \centering
{ \small
\begin{tabular}{|c||l|l|l|l|l|l|} \hline
			& \multicolumn{2}{c|}{LPV-MAX and Regularized LS} & \multicolumn{2}{c|}{LPV-MAX and GN} & \multicolumn{2}{c|}{~\cite{Wingerden2009a}} \\ \hline
			& Simulation & Prediction & Simulation & Prediction & Simulation & Prediction \\\hline
$\infty$dB 	& $96.46$	($0.5843$)	& $96.46$ ($0.5840$)	& $96.69$ ($0.5511$)	& $97.21$ ($0.4354$)	& $99.50$ ($0.1164$)	& $99.64$ ($0.0834$) \\
$40$dB		& $96.52$	($0.5954$)	& $97.08$ ($0.4496$)	& $96.87$ ($0.5429$)	& $97.17$ ($0.4163$)	& $99.48$ ($0.1306$)	& $99.04$ ($0.0553$) \\
$10$dB		& $90.78$	($1.570$)	& $84.98$ ($1.780$)	& $90.76$ ($1.662$)	& $84.29$ ($1.804$)	& $93.06$ ($0.8757$)	& $85.25$ ($0.5100$) \\\hline
\end{tabular}
}
\vspace{-6mm}
\end{table*}

\vspace{-1mm}
\section{Conclusions} \label{sec:conclusion}

In this paper, an identification scheme to estimate LPV-SS models in innovation form with static and affine dependence on the scheduling signal has been presented. The LPV-SS representation has been reformulated into its corresponding LPV-MAX system, which significantly reduced the amount of parameters of the LPV-IO model compared to other predictor based schemes. We had proven that the LPV-MAX model could consistently be identified using the prediction error minimization framework, under some mild conditions on the input and scheduling signals. The LPV-SS model has been realized from the LPV-MAX model by using the basis reduced Ho-Kalman realization scheme. This realization scheme significantly reduces the amount of parameters needed to realize the LPV-SS model. Hence, the overall scheme decrease the computational load significantly, especially with moderate to large LPV-SS models where $\Kfnc$, $\Cfnc$, and $\Dfnc$ are parameter varying. The proposed identification scheme has been illustrated with a simulation example.

\vspace{-3mm}
\appendix[Proof of Lemma~\ref{lem:corrEps}] \label{app:proofLemCorrEps}

First, we show how the sequence $\{\hat\Gamma_i(\cdot)\}$ is generated. Note that $\Gamfnc(\theta,p,t)\Cflth(\theta,p,t)=I$. Hence, collecting all terms in the polynomial with equivalent time-shift $q^{-i}$, it follows that for $i\geq 1$ \vspace{-2mm}
\begin{multline} \label{eq-lem:conInv}
\hat\Gamma_i(\theta,p,t)+\hat\Gamma_{i-1}(\theta,p,t)\Cflth_1(\theta,p,t-i+1)\\+\ldots+\hat\Gamma_{i-\NCh}(\theta,p,t)\Cflth_\NCh(\theta,p,t-i+\NCh) = 0,
\end{multline} \vskip -1.5mm \noindent
with $\hat\Gamma_{i}(\cdot)=0$ for $i<0$ and $\hat\Gamma_0(\cdot)=I$. For example, \vspace{-2mm}
\begin{subequations} \label{eq-prf:exmGam}
\begin{align}
\hat\Gamma_1(\theta,p,t)&=-\Cflth_1(\theta,p,t),\\
\hat\Gamma_2(\theta,p,t)&= \Cflth_1(\theta,p,t)\Cflth_1(\theta,p,t-1)-\Cflth_2(\theta,p,t).
\end{align} \vskip -1.5mm \noindent
\end{subequations}
Second, rewrite $R_r(\theta,t) = \varepsilon(\theta,t-r)\varepsilon^{\!\top}\!(\theta,t)$ as \vspace{-2mm}
\begin{multline}
\expct\left\{R_r(\theta,t)\right\} = \expct\Big\{\sum_{i=0}^\infty \hat\Gamma_i(\theta,p,t-r) \Cflt(p,t-r-i)e(t-r-i) \cdot \\
\Big[ \sum_{j=0}^\infty \hat\Gamma_j(\theta,p,t) \Cflt(p,t-j)e(t-j)\Big]^\top\Big\}.\label{eq-prf:expCorExt}
\end{multline} \vskip -1.5mm \noindent
As $p$ is uncorrelated with $e$, it is clear that all terms in~\eqref{eq-prf:expCorExt} for $r-k\neq j$ are zero. Hence, \vspace{-2mm}
\begin{multline} \label{eq-prf:expCorExtUn}
\expct\left\{R_r(\theta,t)\right\} = \expct\Big\{\sum_{i=0}^\infty  \tilde{\Gamma}_{i}(\theta,p,t-r) \tilde{\Cflt}(p,t-r-i)e(t-r-i) \cdot \\
e^{\!\top}\!(t-r-i) \tilde{\Cflt}^{\top}\!(p,t-i-r)\tilde{\Gamma}^\top_{i+r}(\theta,p,t)\Big\},
\end{multline} \vskip -1.5mm \noindent
with \vspace{-2mm}
\begin{align*} 
\tilde{\Gamma}_i(\theta,p,t) &= \left[ \hat\Gamma_{i}(\theta,p,t)~\ldots~\hat\Gamma_{i-\NC}(\theta,p,t) \right], \\
\tilde{\Cflt}(p,t) &= \left[ \Cflt^\top_0(p,t)~\ldots~\Cflt^\top_\NC(p,t+\NC) \right]^\top.
\end{align*} \vskip -1.5mm \noindent
However, using the construction of $\Gamfnc$~\eqref{eq-lem:conInv} and the fact that $\Gamfnc$ and $\Cflt$ are monic polynomials, the product between $\Gamfnc$ and $\Cflt$ can be simplified. For example, \vspace{-2mm}
\begin{multline*}
\tilde{\Gamma}_{1}(\theta,p,t)\tilde{\Cflt}^\top(p,t-2) = \Cflth_1(\theta,p,t) \Cflth_1(\theta,p,t-1)- \Cflth_2(\theta,p,t) \\
-\Cflth_1(\theta,p,t) \Cflt_1(p,t-1)+ \Cflt_2(p,t)  =  \Cflt_2(p,t) -\Cflth_2(\theta,p,t) \\ + \Gamma_1\!(\theta,p,t)\!\!\left[\!\Cflt_1(p,t-1)\!-\!\Cflth_1(\theta,p,t-1)\!\right] \!\!=\!\bar{\Gamma}_{2}(\theta,p,t)\Omega(\theta,p,t-2)
\end{multline*} \vskip -1.5mm \noindent
Hence, by using~\eqref{eq-lem:defBigMat}, the above example generalizes to \vspace{-1mm}
\begin{subequations} \label{eq-prf:subsReq}
\begin{equation}
\tilde{\Gamma}_{i+r}(\theta,p,t) \tilde{\Cflt}(p,t-i-r) = \bar{\Gamma}_{i+r}(\theta,p,t)\Omega(\theta,p,t-i-r),
\end{equation} \vskip -1.5mm \noindent
and, for $i>0$ \vspace{-1mm}
\begin{equation}
\tilde{\Gamma}_{i}(\theta,p,t-r) \tilde{\Cflt}(p,t-i-r) = \bar{\Gamma}_{i}(\theta,p,t-r)\Omega(\theta,p,t-i-r).
\end{equation} \vskip -1.5mm \noindent
\end{subequations}
Substituting~\eqref{eq-prf:subsReq} in~\eqref{eq-prf:expCorExtUn} results in~\eqref{eq-lem:expautoCor}.

\bibliographystyle{IEEETran}
\bibliography{../../../library}

\end{document}